\documentclass[conference]{IEEEtran}
\IEEEoverridecommandlockouts

\usepackage{cite}
\usepackage{amsmath,amssymb,amsfonts,amsthm}
\usepackage{graphicx}
\newtheorem{proposition}{Proposition}
\newtheorem{corollary}{Corollary}
\usepackage{textcomp}
\usepackage{xcolor}
\usepackage{xurl}
\usepackage{booktabs}
\usepackage{multirow}
\usepackage{array}
\usepackage{algorithm}
\usepackage{algpseudocode}
\usepackage{tikz}
\usetikzlibrary{arrows.meta,positioning,fit,backgrounds,shapes.geometric,calc}

\newcommand{\fittab}[1]{%
  \resizebox{\ifdim\width>\columnwidth\columnwidth\else\width\fi}{!}{#1}}
\newcommand{\tabnote}[1]{%
  \par\vspace{2pt}\parbox{\columnwidth}{\scriptsize\raggedright #1}}
\def\BibTeX{{\rm B\kern-.05em{\sc i\kern-.025em b}\kern-.08em
    T\kern-.1667em\lower.7ex\hbox{E}\kern-.125emX}}

\begin{document}
\bstctlcite{IEEEexample:BSTcontrol}

\title{Cost-Aware Post-Hoc Deferral Under Calibration and Shift:\\
An Environmental AI Case Study}

\author{%
\IEEEauthorblockN{Haoran Yu}
\IEEEauthorblockA{University of Florida\\
Gainesville, FL, USA\\
haoranyu889@gmail.com}
\and
\IEEEauthorblockN{Lifei Liu}
\IEEEauthorblockA{Wichita State University\\
Wichita, KS, USA\\
lliu.lifei@gmail.com}
\and
\IEEEauthorblockN{Danping Zhang}
\IEEEauthorblockA{Nanchang Hangkong University\\
Nanchang, China\\
zhangdanping@nchu.edu.cn}%
}

\maketitle

% -------------------------------------------------------
% Figure 1: environmental stakes, governance pipeline, and deployment action.
% -------------------------------------------------------
\begin{figure*}[t]
\centering
\includegraphics[width=0.97\textwidth]{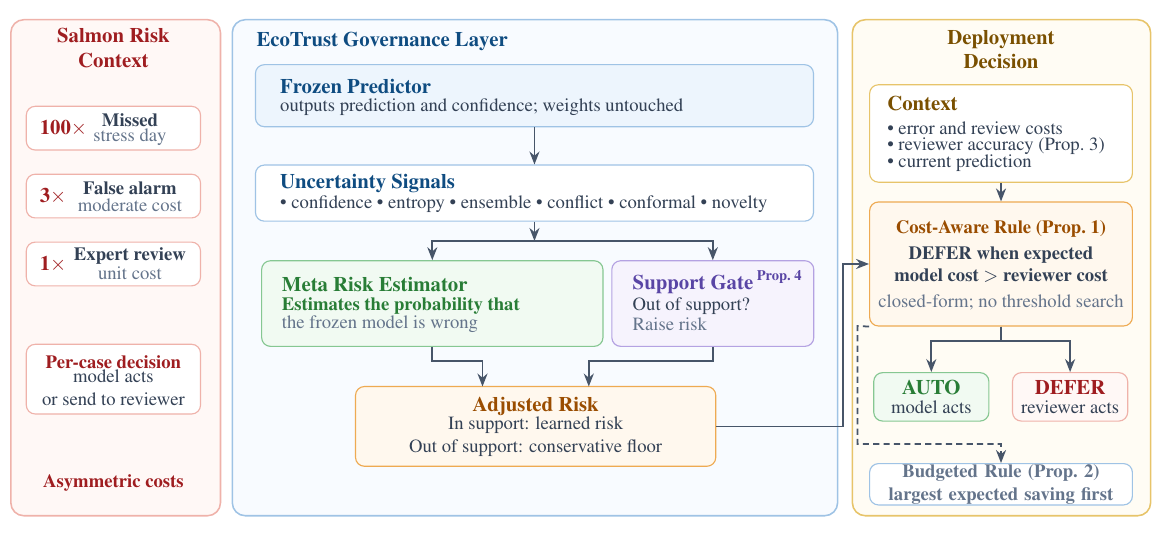}
\caption{EcoTrust studies per-case deferral for a frozen predictor under asymmetric costs. Six uncertainty signals estimate model error; the decision layer combines adjusted risk with error costs, review cost, and reviewer accuracy. Outside calibration support, a pre-specified risk floor invokes a conservative fallback. An optional budget ranks cases by expected saving.}
\label{fig:architecture}
\end{figure*}

% -------------------------------------------------------
\begin{abstract}
Choosing a deferral policy for a frozen classifier requires more than ranking uncertain cases: confidence may be miscalibrated, errors have unequal costs, reviewers can err, and deployment data can leave calibration support. We study these interactions through \textbf{EcoTrust}, a post-hoc framework that compares automatic action with review using a six-group error-risk estimator, class-asymmetric costs, reviewer accuracy, and an optional support gate. On a Columbia River thermal-stress testbed, the learned estimator improves error-ranking area under the receiver operating characteristic curve from 0.869 to 0.889, but Chow's confidence rule has lower in-distribution cost (0.416 versus 0.567 per day). Across 12 off-the-shelf backends, learned risk and a calibration-matched, class-aware confidence estimator each beat raw Chow on six; a paired year-block bootstrap does not resolve their mean cost difference. In transfer to ten river stations, the gate flags every case and becomes an always-review fallback, attaining the lowest cost on eight stations only when review is perfect and unconstrained. These results characterize decision boundaries on one controlled task: richer risk signals do not reliably improve on calibrated confidence, and detected extrapolation does not imply transferable case-level ranking.
\end{abstract}

\begin{IEEEkeywords}
learning to defer, cost-sensitive AI, uncertainty, distribution shift, environmental AI
\end{IEEEkeywords}

% -------------------------------------------------------
\section{Introduction}
\label{sec:intro}

Consider a migration-season day at the Columbia River at The Dalles. Water above $18^\circ$C can impair Chinook, Coho, and Chum salmon~\cite{richter2005temperature}, but a frozen classifier can only provide a probability and a binary prediction. The operational question is who should act on that prediction. In our illustrative primary scenario, a missed stress day costs 100 review units, a false alarm costs three, and expert review costs one; reviewing every day is nevertheless incompatible with a limited workload. The same policy must also specify what happens when it is applied to a river outside its calibration support.

Reject-option and deferral research provides several parts of this decision. Chow's rule~\cite{chow1970} and selective classification~\cite{selective_class} route low-confidence predictions away from automatic action, while learning to defer (L2D) trains a rejector around expert behavior~\cite{learn_to_defer,mozannar2020,verma2022}. Post-hoc L2D estimates the error of a fixed model~\cite{narasimhan2022posthoc}. Closely related work has asked when confidence-based cascade deferral is sufficient~\cite{jitkrittum2023confidence}, evaluated post-hoc confidence estimators for selective classification~\cite{cattelan2024broken}, and connected post-hoc L2D to density-ratio estimation and Chow-style rules~\cite{soen2026density}. The unresolved issue here is therefore narrower: whether richer uncertainty signals improve deployment cost over a calibration-matched, class-aware confidence estimator in a human-review setting with asymmetric errors, imperfect review, quotas, and detected extrapolation.

Our goal is an empirical characterization, not a new Bayes decision rule. We ask when a separately learned error-risk model changes the action selected around a frozen predictor, when confidence is sufficient, and when an out-of-support detector should trigger a predetermined fallback. The Columbia River task supplies one controlled setting in which the same predictor and consequence model can be held fixed while calibration, reviewer quality, review rate, and deployment site vary.

This setting raises three evaluation challenges. First, raw confidence and learned risk are not comparable unless both receive the same calibration data and model capacity; otherwise supervised recalibration can be mistaken for value from additional signals. Second, ranking quality alone does not determine deployment value because false negatives, false positives, review cost, reviewer error, and forced review rates interact. Third, a shift detector can identify unsupported inputs without knowing whether the downstream risk ranking remains useful, so a safe fallback must be distinguished from transfer competence.

\textbf{EcoTrust} makes these comparisons explicit (Fig.~\ref{fig:architecture}). A cross-fitted meta risk estimator combines six signal groups, and a calibration-matched confidence estimator isolates the effect of using additional signals. A cost-aware rule maps either risk estimate to automatic action or review under class-asymmetric costs and reviewer accuracy. A support gate raises risk outside the training support, while fixed-rate experiments separate fallback behavior from ranking quality. Algorithm~\ref{alg:governance} summarizes the resulting deployment procedure.

\begin{itemize}
    \item \textbf{A cost-aware post-hoc formulation} (Secs.~\ref{sec:problem}--\ref{sec:framework}) that states the assumptions linking class-asymmetric errors, reviewer accuracy, review quotas, and a detected-extrapolation fallback.
    \item \textbf{A calibration-matched characterization} (Secs.~\ref{sec:main}--\ref{sec:reviewer}) showing that better error ranking need not lower cost and that the six-signal estimator does not reliably beat a class-aware confidence recalibrator across the studied backend bank.
    \item \textbf{Deployment boundaries under shift} (Sec.~\ref{sec:ood}) that separate an always-review fallback from transferable risk ranking using ten stations and a reviewer-accuracy by forced-review-rate grid.
\end{itemize}

% -------------------------------------------------------
\section{Related Work}
\label{sec:related}

\subsection{Human Deferral, Rejection, and Selective Prediction}
Chow~\cite{chow1970} established the reject option, and selective classification~\cite{selective_class} traded coverage against risk. L2D jointly trained a classifier and rejector from expert decisions~\cite{learn_to_defer}; later work derived consistent softmax~\cite{mozannar2020} and calibrated one-vs-all surrogates~\cite{verma2022}. We evaluate frozen-backend restrictions of the latter two losses, not reproductions of joint classifier--rejector training.

The most relevant post-hoc work spans three neighboring settings. Narasimhan et al.~\cite{narasimhan2022posthoc} estimated a fixed model's error and compared it with expert cost. Jitkrittum et al.~\cite{jitkrittum2023confidence} characterized when confidence suffices in model cascades, including under distribution shift, while Cattelan and Silva~\cite{cattelan2024broken} compared post-hoc confidence estimators for selective classification. Soen et al.~\cite{soen2026density} formulated post-hoc L2D through density-ratio estimation. DeCCaF incorporated class-dependent errors, multiple experts, and workload constraints~\cite{alves2024deccaf}; probabilistic and conformal variants addressed missing annotations, workload allocation, and rejector uncertainty~\cite{nguyen2025probdefer,fang2026conformaldefer}. EcoTrust does not replace these methods. It isolates a narrower deployment comparison: raw confidence, class-aware calibrated confidence, and multi-signal error risk for one reviewer under asymmetric consequences and detected extrapolation. Table~\ref{tab:positioning} summarizes this scope.

\begin{table}[t]
\caption{Deployment settings of selected deferral frameworks.}
\label{tab:positioning}
\centering
\scriptsize
\setlength{\tabcolsep}{1.5pt}
\fittab{%
\begin{tabular}{lccccc}
\toprule
\textbf{Framework} & \textbf{Backend} & \textbf{Costs} & \textbf{Reviewer} & \textbf{Capacity} & \textbf{Out-of-support} \\
 & & & & & \textbf{Fallback} \\
\midrule
Chow~\cite{chow1970} & Fixed & Reject & N/A & No & No \\
Post-hoc L2D~\cite{narasimhan2022posthoc} & Fixed & Expert & Cost model & No & No \\
DeCCaF~\cite{alves2024deccaf} & AI + experts & Class/inst. & Expertise model & Yes & No \\
\textbf{EcoTrust} & Fixed & FN/FP/review & Accuracy input & Optional & Yes \\
\bottomrule
\end{tabular}}
\tabnote{``Explicit fallback'' means a pre-specified action after detected extrapolation; it does not imply general robustness to distribution shift.}
\end{table}

\subsection{Uncertainty Quantification and Environmental AI}
Conformal prediction provides distribution-free marginal coverage under exchangeability~\cite{conformal_gentle}, while ensemble disagreement~\cite{ensemble_uncertainty} and entropy provide uncertainty signals. Environmental studies have modeled salmon productivity~\cite{jones2020chinook} and water temperature~\cite{noaa_stream_temp,streamtemp_gnn}. Xu et al.~\cite{salmon_foundation} combined foundation models with expert input for salmon fisheries; our study instead holds the environmental predictor fixed and evaluates the downstream deferral decision.

% -------------------------------------------------------
\section{Problem Formulation}
\label{sec:problem}

Let $x_t$ denote the environmental observation on day $t$: NOAA weather features (TMAX, TMIN, PRCP, AWND) and USGS hydrological features (discharge, gage height, derived flow statistics). The binary target is
\begin{equation*}
y_t = \mathbf{1}[\text{water\_temp}_t > 18^\circ\text{C}],
\end{equation*}
the study's operational thermal-stress threshold for Chinook salmon. Water temperature is measured independently by USGS and excluded from the features; month and day-of-year are excluded too, so the models use environmental measurements rather than calendar patterns.

A \emph{frozen predictor} $\mathcal{M}$ emits probability $\hat{p}(x)$ and decision $\hat{y}(x)=\mathbf{1}[\hat{p}(x)>\tfrac12]$. It remains fixed throughout governance training and evaluation.

A \emph{governance policy} $\pi:\mathcal{X}\to\{\mathrm{auto},\mathrm{defer}\}$ decides, per case, who acts. Its quality is the expected \emph{deployment cost}
\begin{equation}
\label{eq:cost}
L(\pi) \;=\; \mathbb{E}\bigl[\,c_{FN}\!\cdot\!\mathrm{FN} \;+\; c_{FP}\!\cdot\!\mathrm{FP} \;+\; c_h\!\cdot\!\mathbf{1}[\pi(x)=\mathrm{defer}]\,\bigr],
\end{equation}
where FN and FP count errors in the \emph{final} decision. Thus, an incorrect review receives the same consequence cost as an incorrect automatic decision. The deployment parameters are $(c_{FN},c_{FP},c_h)$, reviewer accuracy $a=P(\text{expert correct}\mid\mathrm{defer})$, and, optionally, review budget $B$. This formulation operationalizes our comparison: error coverage, the fraction of model errors escalated, is cost-blind because it weighs a miss and a false alarm equally. We report it for comparability but optimize~\eqref{eq:cost}.

% -------------------------------------------------------
\section{EcoTrust Deployment Framework}
\label{sec:framework}

\subsection{Stage 1: The Meta Risk Estimator}

Let $s(x)$ collect the six signal groups in Fig.~\ref{fig:architecture}: confidence $|\hat{p}-\tfrac12|$, entropy $H(\hat{p})$, ensemble standard deviation $\sigma_{\text{ens}}$, agent conflict $\mathbf{1}[|\hat{p}_W-\hat{p}_H|>\theta_c]$, two conformal quantities (predicted-label nonconformity and whether the $1-\alpha$ set contains both classes), and distance $\log(1+d_{\text{Mah}}(x,\mathcal{D}_{\text{tr}}))$.

The meta risk estimator is a small model
\begin{equation}
\label{eq:risk}
g_\phi:\; s(x)\;\longmapsto\; \hat{r}(x)\;=\;\hat{P}\bigl(\hat{y}(x)\neq y \,\big|\, s(x)\bigr)
\end{equation}
Equation~\eqref{eq:risk} is fitted by maximum likelihood on held-out data, with binary target $\mathbf{1}[\hat{y}\neq y]$. The target uses no recorded review decision: ordinary supervised calibration data suffice, and EcoTrust does not fit an expert-specific performance model.

\textbf{Cross-fitting.} The 488-day calibration block contains too few errors to fit $g_\phi$ reliably. We therefore use five expanding, year-blocked folds. For each validation block, the ensemble, conformal model, and support model are refitted using strictly earlier dates; early years serve only as warm-up until both classes are present. Combining 707 out-of-fold rows with the calibration block gives 1{,}195 meta-training rows with an 8.2\% error rate. The deployed $\mathcal{M}$ remains trained only on the designated training block.

\textbf{Calibration and ranking.} The decision rule compares $\hat r$ with a cost-derived threshold. A miscalibrated estimate can therefore produce an incorrect review rate even when its ranking is accurate. We use cross-fitted Platt scaling for $g_\phi$.

\subsection{Stage 2: The Cost-Aware Decision Rule}

Write $c_{\text{err}}(\hat{y}) = c_{FP}$ if $\hat{y}{=}1$ and $c_{FN}$ if $\hat{y}{=}0$: this is the cost incurred \emph{if} the automatic decision turns out to be wrong, and it is determined by the predicted label, since a wrong positive is a false alarm and a wrong negative is a miss. The two available actions then have conditional expected costs
\begin{align}
\mathbb{E}[\text{cost}\mid\mathrm{auto}] &= \hat{r}(x)\cdot c_{\text{err}}(\hat{y}(x)), \label{eq:auto}\\
\mathbb{E}[\text{cost}\mid\mathrm{defer}] &= \underbrace{c_h + (1-a)\bigl(\hat{p}\,c_{FN} + (1-\hat{p})\,c_{FP}\bigr)}_{\textstyle =:\;C_{\mathrm{rev}}(x)}. \label{eq:defer}
\end{align}
Equation~\eqref{eq:defer} includes both review cost and the expected consequence of reviewer error. With accuracy $a$, the reviewer errs with probability $1-a$; under calibrated $\hat p$, that error is a miss with probability $\hat p$ and a false alarm otherwise. The policy selects the lower-cost action.

The comparison below is standard in reject-option and post-hoc deferral~\cite{chow1970,narasimhan2022posthoc}. We state it to fix the asymmetric cost semantics used in every experiment.

\begin{proposition}[Conditional decision rule]
\label{prop:bayes}
Fix any predictor $\mathcal{M}$ and let $r(x)=P(\hat{y}(x)\neq y\mid x)$ be the true error probability. Assume the reviewer errs independently of the label and $\hat{p}$ is calibrated, so $C_{\mathrm{rev}}(x)$ in~\eqref{eq:defer} is the true deferral cost. Among \emph{all} measurable policies $\pi:\mathcal{X}\to\{\mathrm{auto},\mathrm{defer}\}$, the cost~\eqref{eq:cost} is minimized by
\begin{equation}
\label{eq:rule}
\pi^\star(x)=\mathrm{defer} \quad\iff\quad r(x)\,c_{\text{err}}(\hat{y}(x)) \;>\; C_{\mathrm{rev}}(x).
\end{equation}
\end{proposition}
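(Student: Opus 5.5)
The plan is the usual pointwise Bayes argument. First, decompose the expected cost~\eqref{eq:cost} of an arbitrary measurable policy $\pi$ by conditioning on $x$ via the tower property. Write $A=\{x:\pi(x)=\mathrm{auto}\}$, which is measurable, and let $D$ be its complement. The goal is the identity
\begin{equation*}
L(\pi)=\mathbb{E}\bigl[\mathbf{1}[x\in A]\,\mathrm{cost}_{\mathrm{auto}}(x)+\mathbf{1}[x\in D]\,\mathrm{cost}_{\mathrm{defer}}(x)\bigr],
\end{equation*}
where $\mathrm{cost}_{a}(x)$ is the conditional expected cost given $x$ when action $a$ is taken. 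This holds because, once $x$ is fixed, the action is deterministic, and the only remaining randomness lies in $y$ and in the reviewer's correctness.

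The second step computes the two conditional costs. Under $\mathrm{auto}$, the final decision is $\hat y(x)$, so the decision is wrong with probability $r(x)$. The type of error is determined by $\hat y(x)$: a wrong positive is an FP and a wrong negative is an FN. Hence $\mathrm{cost}_{\mathrm{auto}}(x)=r(x)\,c_{\text{err}}(\hat y(x))$, with no review cost incurred.

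Under $\mathrm{defer}$, the review cost $c_h$ is paid deterministically. The reviewer errs with probability $1-a$, independently of $y$ given $x$, which is the stated assumption. An error is an FN when $y=1$ and an FP when $y=0$. Independence lets the probabilities factor as $(1-a)P(y=1\mid x)$ and $(1-a)P(y=0\mid x)$. Calibration gives $P(y=1\mid x)=\hat p(x)$. Together these yield exactly $C_{\mathrm{rev}}(x)$ from~\eqref{eq:defer}.

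This step is the main obstacle. The delicate point is how the hypotheses are used: calibration must be read as $P(y=1\mid x)=\hat p(x)$ pointwise, not merely conditional on $\hat p$. The statement's phrase ``so $C_{\mathrm{rev}}(x)$ is the true deferral cost'' licenses this reading. I would state that interpretation explicitly, or equivalently take the identity $\mathrm{cost}_{\mathrm{defer}}=C_{\mathrm{rev}}$ directly as the hypothesis.

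Finally, minimize the integrand pointwise. For every $x$,
\begin{equation*}
\mathbf{1}[x\in A]\,\mathrm{cost}_{\mathrm{auto}}(x)+\mathbf{1}[x\in D]\,C_{\mathrm{rev}}(x)\ \ge\ \min\{\mathrm{cost}_{\mathrm{auto}}(x),C_{\mathrm{rev}}(x)\}.
\end{equation*}
Equality holds precisely when $\pi$ chooses the cheaper action. The policy $\pi^\star$ in~\eqref{eq:rule} attains this minimum for every $x$, with ties broken toward $\mathrm{auto}$, which is harmless.

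Two technical points remain. Measurability of $\pi^\star$ follows because $r$, $\hat p$, and $\hat y$ are measurable functions. The expectations are finite because all costs are bounded. Integrating the pointwise inequality then gives $L(\pi^\star)\le L(\pi)$ for every measurable $\pi$.
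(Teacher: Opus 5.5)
Your proposal is correct and uses the same pointwise Bayes argument as the paper: condition on $x$, identify the two conditional action costs as \eqref{eq:auto} and \eqref{eq:defer}, and minimize the integrand pointwise. You also make explicit what the paper's three-line proof takes for granted, namely the derivation of those two costs, the pointwise reading of calibration, $P(y=1\mid x)=\hat p(x)$, needed for $C_{\mathrm{rev}}(x)$ to be the true deferral cost, and the tie-breaking, measurability, and integrability checks.
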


\begin{proof}
Condition on $x$. The policy chooses between two actions whose conditional expected costs are exactly~\eqref{eq:auto} and~\eqref{eq:defer}, and $L(\pi)=\mathbb{E}_x\bigl[\mathbb{E}[\text{cost}\mid \pi(x),x]\bigr]$ is an integral of a pointwise choice. An integral of pointwise minima is minimized by minimizing pointwise, and the pointwise minimizer defers precisely when~\eqref{eq:rule} holds.
\end{proof}

We do not claim the pointwise Bayes comparison as new. Its role is to make the deployment assumptions explicit and reduce the learned component to estimating $r(x)$. The deployed rule substitutes $\hat r$ and $\hat p$, so the result applies only to the extent that both are accurate and calibrated.

\begin{corollary}[Closed-form asymmetric thresholds]
\label{cor:thresholds}
With a perfect reviewer ($a{=}1$), rule~\eqref{eq:rule} becomes a threshold on the error probability that depends on the \emph{predicted class}:
\begin{equation}
\label{eq:thresholds}
\mathrm{defer} \iff
\begin{cases}
r(x) > c_h/c_{FN} & \text{if } \hat{y}(x)=0,\\[1pt]
r(x) > c_h/c_{FP} & \text{if } \hat{y}(x)=1.
\end{cases}
\end{equation}
\end{corollary}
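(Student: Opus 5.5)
The plan is to specialize Proposition~\ref{prop:bayes} by substituting $a=1$ into the deferral cost~\eqref{eq:defer}. The reviewer-error term $(1-a)\bigl(\hat p\,c_{FN}+(1-\hat p)c_{FP}\bigr)$ then vanishes identically, so $C_{\mathrm{rev}}(x)=c_h$ for every $x$, regardless of $\hat p$. In particular, the calibration assumption on $\hat p$ in Proposition~\ref{prop:bayes} becomes unnecessary here. The rule~\eqref{eq:rule} therefore reads $\pi^\star(x)=\mathrm{defer}\iff r(x)\,c_{\text{err}}(\hat y(x))>c_h$.

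Next, I would split into the two cases determined by the predicted label, using the definition of $c_{\text{err}}$ given before~\eqref{eq:auto}. If $\hat y(x)=0$, then $c_{\text{err}}=c_{FN}$, so the condition is $r(x)\,c_{FN}>c_h$. If $\hat y(x)=1$, then $c_{\text{err}}=c_{FP}$, so it is $r(x)\,c_{FP}>c_h$. Dividing by the positive constant $c_{FN}$ or $c_{FP}$ preserves the strict inequality and yields~\eqref{eq:thresholds}.

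The only point requiring care is the division step. I would state explicitly that $c_{FN},c_{FP}>0$, which is implicit in the cost model since errors are penalized. I would also note that if one of these costs were zero, the corresponding case never defers, consistent with reading the threshold as $+\infty$. Since $\hat y(x)$ is a deterministic function of $x$, the case split is measurable, so the resulting policy is exactly $\pi^\star$ and inherits its optimality from Proposition~\ref{prop:bayes}. Tie-breaking at equality matches~\eqref{eq:rule}, which auto-accepts on ties.
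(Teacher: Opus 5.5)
Your proposal is correct and is the intended argument: the paper states this corollary without a separate proof, as the immediate specialization of rule~\eqref{eq:rule} obtained by setting $a=1$ so that $C_{\mathrm{rev}}(x)=c_h$, then splitting on $\hat y(x)$ via $c_{\text{err}}$ and dividing by the positive error cost. Your added remarks are accurate refinements rather than a different route: calibration of $\hat p$ becomes unnecessary when $a=1$, the division needs $c_{FN},c_{FP}>0$, and ties are auto-accepted under the strict inequality.
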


At $c_{FN}{:}c_{FP}{:}c_h = 100{:}3{:}1$, the thresholds in~\eqref{eq:thresholds} are $0.01$ and $0.333$. The system therefore tolerates much less estimated risk for a predicted-safe day. A single class-independent threshold cannot express this asymmetry.

\begin{proposition}[Convex cost--budget frontier]
\label{prop:budget}
Under a review budget of $k$ cases, the cost-minimizing policy defers the $k$ cases with the largest expected saving $\Delta(x)=r(x)c_{\text{err}}(\hat{y})-C_{\mathrm{rev}}(x)$, truncated at $\Delta>0$. The resulting optimal cost $L^\star(k)$ is non-increasing and convex in $k$, and the escalated sets are nested, so error coverage is non-decreasing in $k$.
\end{proposition}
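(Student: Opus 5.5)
The plan is to work in the finite-sample setting, where the budget constrains the number of deferred cases among a fixed deployment set $x_1,\dots,x_n$. Cost~\eqref{eq:cost} is then the empirical (or conditional-on-$x$) sum. By the conditioning argument in the proof of Proposition~\ref{prop:bayes}, any policy $\pi$ with deferred set $S$ has conditional expected cost
\begin{equation*}
L(S)=\sum_{i=1}^n r(x_i)\,c_{\text{err}}(\hat y(x_i)) \;-\; \sum_{i\in S}\Delta(x_i),
\end{equation*}
because deferring case $i$ replaces its automatic cost~\eqref{eq:auto} with $C_{\mathrm{rev}}(x_i)$ from~\eqref{eq:defer}. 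The first sum is a constant $A$ that does not depend on $\pi$. Minimizing cost under $|S|\le k$ is therefore the same as maximizing $\sum_{i\in S}\Delta(x_i)$ subject to $|S|\le k$.

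This is a cardinality-constrained sum, so an exchange argument solves it. Sort the savings $\Delta_{(1)}\ge\Delta_{(2)}\ge\cdots\ge\Delta_{(n)}$, breaking ties by a fixed index order. If an optimal $S$ omits a case with larger $\Delta$ than some member, swapping the two does not decrease the objective. If $S$ contains a case with $\Delta\le0$, removing it does not hurt. Hence an optimal set is $S_k=\{(1),\dots,(\min(k,m))\}$, where $m=\#\{i:\Delta_i>0\}$, which is exactly the ``top $k$, truncated at $\Delta>0$'' rule. The resulting cost is
\begin{equation*}
L^\star(k)=A-\sum_{j=1}^{\min(k,m)}\Delta_{(j)}.
\end{equation*}

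Monotonicity and convexity follow from the increments $L^\star(k)-L^\star(k-1)=-\Delta_{(k)}^{+}$, where $\Delta^{+}=\max(\Delta,0)$. Each increment is $\le0$, which gives non-increasing cost. Because $\Delta_{(k)}^{+}$ is non-increasing in $k$, the increments are non-decreasing, which is discrete convexity. The convex frontier over real budgets is the piecewise-linear interpolation, or equivalently a fractional or randomized deferral of the marginal case. Nestedness $S_k\subseteq S_{k+1}$ holds by construction because the prefixes of one fixed sorted order are nested. Error coverage is the fraction of model errors that are escalated, $\sum_{i\in S_k}\mathbf 1[\hat y_i\neq y_i]$ divided by a fixed total. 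Its numerator is a sum of nonnegative terms over nested sets, so it is non-decreasing. In expectation, $\sum_{i\in S_k} r(x_i)$ is non-decreasing for the same reason.

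The main obstacle is making the statement precise rather than any calculation. Convexity is only meaningful for integer $k$ in the discrete sense, or after interpolation. Ties in $\Delta$ must be broken consistently, or else different optimal sets at $k$ and $k+1$ could fail to be nested. I would therefore fix a deterministic tie-breaking order up front and state that the claims refer to this canonical optimal family. In the population version, where the budget is a review rate, the analogous argument thresholds $\Delta$ at a quantile. Convexity then follows because $L^\star$ has derivative $-\Delta^{+}$ evaluated at the $(1-\text{rate})$-quantile, which is non-decreasing in the rate.
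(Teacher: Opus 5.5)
Your proposal is correct and follows the paper's argument. In both, cost is additive across cases, so the budgeted problem reduces to picking the $k$ largest savings, and monotonicity, convexity and nestedness then follow from the sorted increments of $L^\star(k)=L^\star(0)-\sum_{j\le k}\Delta_{(j)}$. Writing the increments as $-\Delta_{(k)}^{+}$ and fixing a tie-breaking order usefully tightens two points the paper leaves implicit: its sum is untruncated, and it does not say how ties are broken.
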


\begin{proof}
Deferring case $i$ changes the cost by exactly $-\Delta(x_i)$, independently of which other cases are deferred, so the budgeted problem is a selection of the $k$ largest $\Delta$'s. Writing $\Delta_{(1)}\geq\Delta_{(2)}\geq\cdots$ for the sorted savings, $L^\star(k)=L^\star(0)-\sum_{i\leq k}\Delta_{(i)}$. Its increments $-\Delta_{(k)}$ are non-decreasing in $k$, which is convexity; and $\Delta_{(k)}>0$ up to the truncation point gives monotonicity. The top-$k$ sets are nested by construction.
\end{proof}

Convexity formalizes diminishing returns on review capacity. We verify it with the oracle at-most-$k$ diagnostic; Fig.~\ref{fig:operating}(b) separately compares deployable rankings at fixed review rates.

\begin{proposition}[Breakeven reviewer accuracy]
\label{prop:breakeven}
Let $S$ be any fixed escalation set. Deferring $S$ raises expected \emph{accuracy} above the fully automatic system if and only if
\begin{equation}
\label{eq:breakeven}
a \;>\; 1 - \mathbb{E}[\,r(x)\mid x\in S\,] \;=\; \text{the model's accuracy on } S .
\end{equation}
\end{proposition}

\begin{proof}
Outside $S$ nothing changes. On $S$ the model is correct with probability $1-\mathbb{E}[r\mid S]$ and the reviewer with probability $a$, so the accuracy change is $\tfrac{|S|}{n}\bigl(a-(1-\mathbb{E}[r\mid S])\bigr)$, which is positive exactly under the stated condition.
\end{proof}

The breakeven in~\eqref{eq:breakeven} is a computable property of the escalation set. Section~\ref{sec:reviewer} verifies that the measured crossing lies between reviewer accuracies 0.90 and 0.95, around the predicted 0.9305.

\subsection{Support Gate and Detected Extrapolation}

Every property above is conditional on $\hat r$ estimating $r$. This condition may fail under distribution shift: $g_\phi$ was fitted on in-distribution calibration data and can extrapolate toward low risk on a new river (Sec.~\ref{sec:ood}).

We therefore give the policy an explicit validity condition. Let $d(x)$ be the squared Mahalanobis distance to the training feature distribution and $d_q$ its $(1-\epsilon)$ training quantile ($\epsilon{=}0.05$). Define the \emph{deployed risk}
\begin{equation}
\label{eq:gate}
\tilde{r}(x) = \begin{cases}
\hat{r}(x), & d(x)\leq d_q \quad\text{(in support)},\\
\max\bigl(\hat{r}(x),\ \tfrac12\bigr), & d(x) > d_q \quad\text{(extrapolating)}.
\end{cases}
\end{equation}
The floor in~\eqref{eq:gate} corresponds to maximum uncertainty for a binary error indicator. The gate uses neither labels nor target-domain fitting and can therefore be applied from the first deployment day at a new site.

\begin{proposition}[Fallback cost on out-of-support cases]
\label{prop:gate}
Substituting $\tilde{r}$ for $\hat{r}$ in rule~\eqref{eq:rule} yields two properties on out-of-support cases. \emph{(i)} With a perfect reviewer ($a{=}1$), the policy defers whenever $c_h/c_{\text{err}}(\hat{y}) < \tfrac12$, so an error costs more than twice a review. \emph{(ii)} Conditional on deferral, the expected cost is at most $c_h + (1-a)\max(c_{FN},c_{FP})$, independent of $\mathcal{M}$'s error, and equals $c_h$ when $a{=}1$.
\end{proposition}

\begin{proof}
\emph{(i)} On an out-of-support case $\tilde{r}\geq\tfrac12$, so the left side of~\eqref{eq:rule} is at least $c_{\text{err}}(\hat{y})/2$; with $a{=}1$ this exceeds $c_h$ precisely when $c_h/c_{\text{err}}(\hat{y})<\tfrac12$. \emph{(ii)} A deferred case incurs $c_h$ plus the reviewer's expected error cost, which is bounded by $(1-a)\max(c_{FN},c_{FP})$ and does not involve $\mathcal{M}$ at all.
\end{proof}

On flagged cases, the model's possibly corrupted confidence no longer decides whether the fallback is invoked. This is a conditional fail-safe, not a robustness guarantee: concept shift without covariate movement and undetected covariate shift remain outside its scope.

\begin{algorithm}[t]
\caption{EcoTrust deployment rule}
\label{alg:governance}
\begin{algorithmic}[1]
\Statex \textbf{Stage 1: Fit risk and support models} ($\mathcal{M}$ remains fixed)
\State Cross-fit $\mathcal{M}$'s recipe on $\mathcal{D}_{\text{train}}$ for out-of-fold signals $s(x_i)$ and errors $e_i{=}\mathbf{1}[\hat{y}_i{\neq}y_i]$; append $\mathcal{D}_{\text{cal}}$
\State Fit meta risk estimator $g_\phi:\ s(x)\mapsto \hat{r}$ on $\{(s(x_i),e_i)\}$ \Comment{Platt-calibrated}
\State Fit support gate: $d_q \leftarrow (1{-}\epsilon)$ quantile of $d_{\text{Mah}}(x,\mathcal{D}_{\text{train}})$
\Statex \textbf{Stage 2: Deploy} (per case; no search, no tuning)
\State $\hat{r}\leftarrow g_\phi(s(x_t))$;\quad $\tilde{r}\leftarrow \hat{r}$ if $d(x_t)\!\leq\! d_q$ else $\max(\hat{r},\tfrac12)$
\If{$\tilde{r}\cdot c_{\text{err}}(\hat{y}_t) > C_{\mathrm{rev}}(x_t)$}
  \State $e_t\!\leftarrow\!1$; route to expert; $d_t\!\leftarrow$ expert decision
\Else\ \ $e_t\!\leftarrow\!0$; \ $d_t\!\leftarrow\!\hat{y}_t$
\EndIf
\Statex \textbf{Budgeted variant:} defer the top-$B$ fraction by expected saving $\Delta(x)$ (Prop.~\ref{prop:budget})
\end{algorithmic}
\end{algorithm}

% -------------------------------------------------------
\section{Experimental Setup}
\label{sec:setup}

\textbf{Data.} We combine daily National Oceanic and Atmospheric Administration weather from Climate Data Online at Portland International Airport (USW00024157) with United States Geological Survey National Water Information System hydrology at the Columbia River at The Dalles (14105700) for 1996--2024. All inputs are observed environmental measurements. Days without water temperature are dropped because their labels are undefined. Evaluation covers the August--November migration season.

\textbf{Splits.} The split is strictly temporal: train $<$2011 ($n{=}1{,}086$, 69.8\% positive), calibration 2011--2014 ($n{=}488$, 54.3\%), and test 2015+ ($n{=}1{,}208$, 55.4\%). The random-forest, XGBoost, and gradient-boosting ensemble is fitted only on the training block and reaches 92.7\% test accuracy. Every learned auxiliary signal in cross-fitting uses earlier dates only.

\textbf{Costs.} The primary scenario uses $c_{FN}{:}c_{FP}{:}c_h = 100{:}3{:}1$, assigning a miss a much greater cost than a false alarm and taking one review as the unit cost. This ratio is illustrative rather than an empirically measured management cost. Section~\ref{sec:costsweep} therefore sweeps $c_{FN}$ from 1 to 500. A daily cost of 1.0 equals one review with no decision error.

\textbf{Meta risk estimator.} Unless noted, $g_\phi$ is an $\ell_2$-regularized logistic regression on the standardized, median-imputed six-group signal bank, wrapped in cross-fitted Platt scaling (\texttt{CalibratedClassifierCV}, sigmoid, five inner folds) so each risk value comes from a model that did not see that case (Sec.~\ref{sec:framework}). No hyperparameter is tuned on test; the regularization is the scikit-learn default and folds are fixed a priori. The identical estimator is used for every backend in Sec.~\ref{sec:calib}; only the frozen predictor changes. This logistic estimator is the reported EcoTrust (C1); Fig.~\ref{fig:main} also plots a gradient-boosted variant (C2) and a no-gate variant (C3), which cluster at the same point.

\textbf{Baselines.} All use the same frozen predictor: A0 is no governance, A1 a cost-sensitive automatic threshold, B0 random deferral, B1--B2 earlier hand-crafted heuristics, B3 selective classification, B4 Chow's rule, and B5 a frozen-backend raw-feature rejector adapted from Cortes et al.~\cite{cortes2016}. B6 and B7 are post-hoc frozen-backend variants of the Mozannar--Sontag and one-vs-all L2D surrogates~\cite{mozannar2020,verma2022}; they are not full reproductions of joint classifier--rejector training. Their class logits remain fixed at $(\log(1-\hat p),\log\hat p)$, while a two-layer defer head uses the signal bank. The published losses are optimized on 707 temporal out-of-fold rows with simulated 90\%-accurate expert decisions; calibration cost selects thresholds and B6's $\alpha\in\{0,.5,1\}$. We report three seeds.

% -------------------------------------------------------
\section{Main Comparison}
\label{sec:main}

\begin{figure}[t]
\centering
\includegraphics[width=\columnwidth]{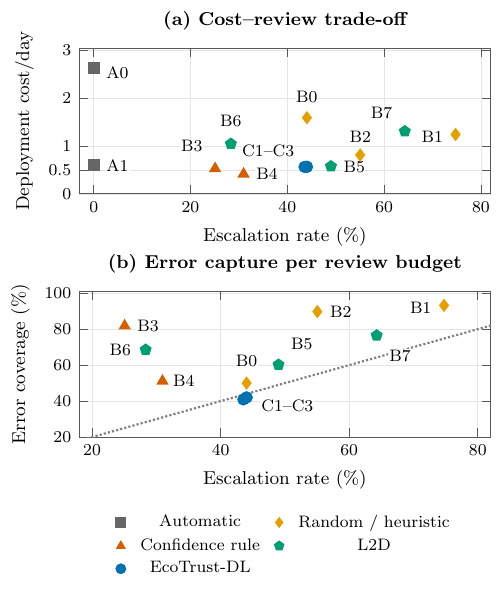}
\caption{Chow provides the best in-distribution cost--review trade-off, while high error coverage often requires broad review. (a) Deployment cost versus escalation rate; lower-left is preferable. (b) Error coverage versus escalation rate; upper-left is preferable and the dotted diagonal is random ranking. Marker families separate automatic policies, random/earlier heuristics, confidence rules, L2D, and EcoTrust-DL; codes A0--B7 are defined in the baseline paragraph and C1--C3 in the meta-risk-estimator paragraph (Sec.~\ref{sec:setup}), and B6--B7 are three-seed means. All policies use the same frozen predictor ($c_{FN}{:}c_{FP}{:}c_h{=}100{:}3{:}1$, test $n{=}1{,}208$).}
\label{fig:main}
\end{figure}

Figure~\ref{fig:main} gives the central comparison. First, the earlier heuristics do not justify their tuning. B1 escalates 74.8\% of days and costs 1.244, while B2 reviews 55.1\% and costs 0.814. Both are worse than the fully automatic cost-sensitive threshold (0.596).

Second, \textbf{Chow is the strongest in-distribution policy}: it costs 0.416 per day versus EcoTrust's 0.567. At $c_{FN}{=}100$, EcoTrust must estimate risk near the predicted-negative threshold $c_h/c_{FN}=0.01$ from 1{,}195 meta-training rows. It reviews 44.0\% of days, compared with Chow's 31.0\%. The two post-hoc L2D variants cost 1.048 and 1.310. For a well-calibrated in-distribution backend, these results favor Chow.

Third, the learned estimator remains the best error model by a small margin. Its AUC/Brier are 0.889/0.056, compared with 0.869/0.058 for confidence and 0.861/0.061 for the raw-feature rejector. Reliability is monotone by decile. Better error ranking therefore does not by itself imply lower decision cost.

B1 has the highest error coverage (93.2\%) but reviews nearly three-quarters of all cases. This contrast illustrates why error coverage alone is insufficient for a deployment that charges for review.

\subsection{Which Signals Carry the Risk?}

All 63 non-empty subsets were evaluated; Fig.~\ref{fig:mechanism}(b)--(c) shows the leave-one-out slice. No signal is individually load-bearing: removing one changes cost by at most 0.011 and AUC by at most 0.015. In fact, no removal increases cost, while entropy and conformal information contribute most to AUC. We retain the pre-specified full bank rather than select features using test cost.

% -------------------------------------------------------
\section{When Does a Learned Risk Model Pay?}
\label{sec:calib}

Section~\ref{sec:main} raises a practical question: if confidence already yields a cheaper policy, when is a separate risk estimator useful? Calibration is one plausible moderator because a confidence rule relies directly on the backend probability scale, whereas the meta-model can correct systematic distortions.

We test this mechanistically with 12 off-the-shelf backends and the primary ensemble (13 real backends), plus one deliberately over-confident control. Randomized models use three seeds; confidence intervals resample complete test years, preserving temporal dependence.

\begin{figure*}[t]
\centering
\includegraphics[width=0.93\textwidth]{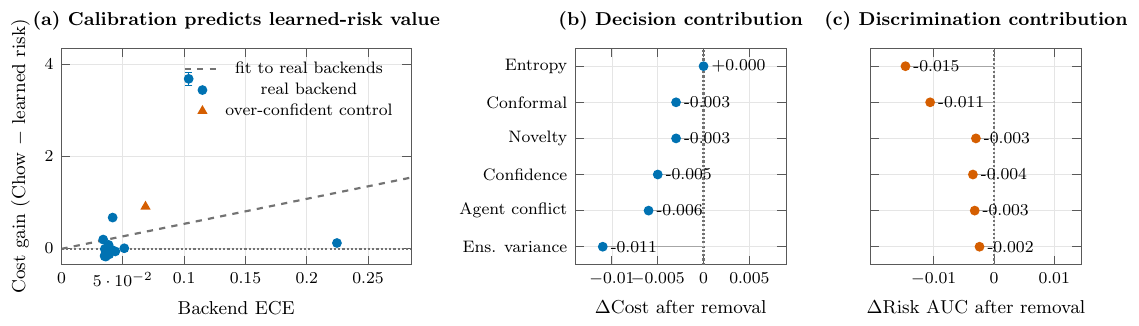}
\caption{Calibration partly predicts learned-risk value, while no signal is individually load-bearing. (a) Backend expected calibration error (ECE) versus cost gain (Chow cost minus EcoTrust-DL cost); the dashed fit excludes the over-confident control. Error bars show across-seed s.d. (b)--(c) Removing one signal changes test cost by at most 0.011 and risk AUC by at most 0.015. Positive $\Delta$Cost means the removed signal helped the decision; negative $\Delta$AUC means it helped discrimination.}
\label{fig:mechanism}
\end{figure*}

Excluding the control, Pearson correlation is 0.278 with year-block bootstrap 95\% CI [0.120, 0.405]; Spearman correlation is 0.432 [--0.049, 0.736]. This bootstrap resamples test years, so the interval is temporal uncertainty \emph{conditional on this fixed backend bank}, not uncertainty over backends; we read 0.278 as descriptive within the bank, not an inferential guarantee. Learned risk wins on six of 13 real backends: XGBoost, logistic regression, Gaussian Naive Bayes, two decision trees, and AdaBoost. For the ensemble it costs 0.567, compared with Chow's 0.416, while it reduces the over-confident control from 1.478 to 0.566. The positive Pearson value is consistent with an association, but the rank interval includes zero. Miscalibration is therefore an informative diagnostic, not a sufficient condition.

\textbf{Is the value merely recalibration?} Because $g_\phi$ receives supervised error labels while a confidence rule uses the raw probability, we add a calibration-matched, class-aware recalibrator. It uses the same Platt/logistic machinery, rows, folds, and support gate, but only $[\,|\hat p-\tfrac12|,\ \hat y,\ |\hat p-\tfrac12|\!\cdot\!\hat y\,]$. Table~\ref{tab:recalibration} shows that the full bank is cheaper on seven of 12 backends, but its mean cost is slightly higher and the paired year-block interval crosses zero. Its advantage over raw confidence is therefore largely consistent with supervised, class-aware recalibration rather than a reliably beneficial six-signal mechanism.

\begin{table}[t]
\centering
\caption{Calibration-matched comparison over 12 backends. Costs are averaged over three seeds; $\Delta$ is full-bank minus class-aware cost per day.}
\label{tab:recalibration}
\fittab{
\begin{tabular}{@{}lr@{}}
\toprule
Comparison or statistic & Result \\
\midrule
Full bank cheaper than raw Chow & 6/12 \\
Class-aware cheaper than raw Chow & 6/12 \\
Full bank cheaper than class-aware & 7/12 \\
Mean $\Delta$ (year-block 95\% CI) & $+0.0082$ [$-0.0002$, $0.0172$] \\
Across-backend $\Delta$ range & [$-0.017$, $+0.109$] \\
Pearson(ECE, $\Delta$) & 0.011 \\
\bottomrule
\end{tabular}}
\tabnote{Positive $\Delta$ favors the class-aware recalibrator. The interval uses 2{,}000 paired year-block resamples. All methods use the primary cost setting and a perfect simulated reviewer.}
\end{table}

% -------------------------------------------------------
\section{Cost Sensitivity and Review Budgets}
\label{sec:costsweep}

\begin{figure*}[t]
\centering
\includegraphics[width=0.82\textwidth]{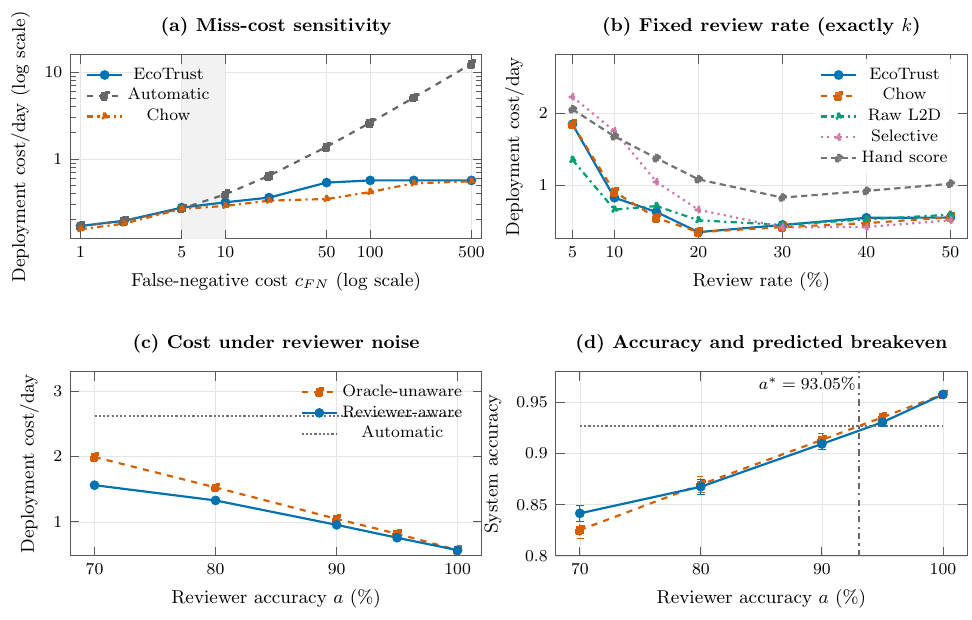}
\caption{Deployment costs depend on error severity, review budget, and reviewer quality. (a) EcoTrust becomes cheaper than automation between $c_{FN}=5$ and $10$, although Chow remains cheaper in-distribution. (b) No ranking dominates at exact review rates. (c)--(d) Reviewer-aware decisions reduce cost under noise; the predicted breakeven $a^*=93.05\%$ is near the accuracy crossing. Accuracy bars show Monte Carlo s.d.}
\label{fig:operating}
\end{figure*}

Figure~\ref{fig:operating}(a) sweeps the miss cost across three orders of magnitude. Escalation rises from 1.7\% and plateaus at 44.0\%, while $c_h/c_{FN}$ falls from 1 to 0.002; no policy threshold is searched.

When misses are cheap, review can cost more than the errors it prevents: at $c_{FN}=5$, EcoTrust costs 0.275 versus 0.268 for automation. At $c_{FN}=10$, it is 19.2\% cheaper, and at $c_{FN}=100$ it is 78.4\% cheaper. Chow still costs less at the primary operating point (0.416 vs.\ 0.567).

Figure~\ref{fig:operating}(b) ranks cases by expected saving under each method's own risk estimate. The hand-weighted composite is dominated at every displayed review rate. Among principled scores, the raw-feature rejector is best at 5--10\%, Chow at 15--20\%, and selective classification at 30--50\%. The non-monotone curves are expected because the comparison forces exactly $k$ reviews; the oracle at-most-$k$ diagnostic is non-increasing and convex as Prop.~\ref{prop:budget} predicts.

% -------------------------------------------------------
\section{Evaluation Without the Oracle}
\label{sec:reviewer}

Many deferral evaluations assume a perfect reviewer. Figure~\ref{fig:operating}(c)--(d) relaxes that assumption. A simulated reviewer flips each deferred decision with probability $1-a$; results average 200 Monte Carlo draws with standard deviations. The \emph{oracle-unaware} policy is optimized for $a{=}1$, whereas the \emph{reviewer-aware} policy uses the specified $a$ in rule~\eqref{eq:rule}.

Prop.~\ref{prop:breakeven} predicts $a^\star=0.9305$, the model's accuracy on EcoTrust's escalated set. The simulation straddles it: oracle-unaware deferral has accuracy 0.913 at $a=0.90$ and 0.936 at $a=0.95$, compared with 0.927 without review.

Reviewer-awareness reduces cost but cannot compensate for a poor reviewer. At $a=0.70$, it lowers cost from 1.994 to 1.563 and escalation from 44.0\% to 36.8\%, while accuracy remains only 0.842. The same 70--100\% sweep covers Chow, selective classification, the raw-feature rejector, and both post-hoc L2D variants. Reviewer accuracy is therefore a policy input to be estimated from audited decisions, not an oracle assumption.

% -------------------------------------------------------
\section{Detected Extrapolation: A Conservative Fail-Safe}
\label{sec:ood}

\begin{figure*}[t]
\centering
\includegraphics[width=0.84\textwidth]{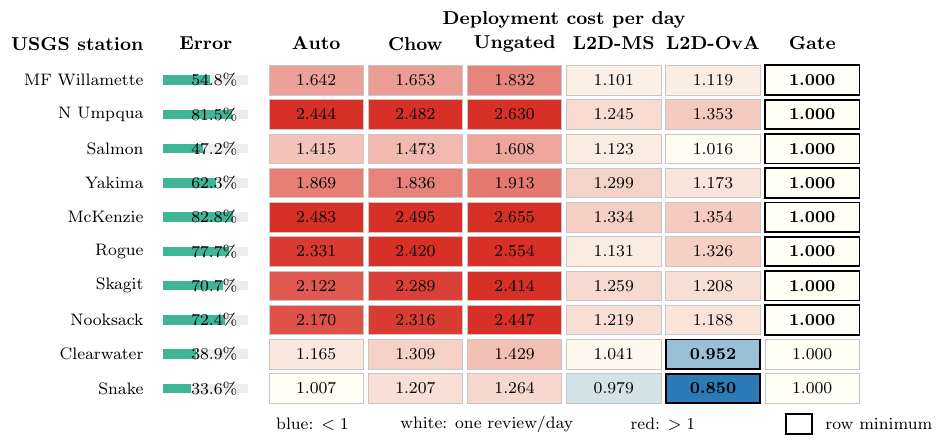}
\caption{Under station transfer, the support gate is identical to an always-review policy and costs one per day with a perfect reviewer. It is cheapest on eight stations; L2D-OvA is cheapest on Clearwater and Snake. No target labels or tuning are used. Blue cells cost less than one review per day, red cells cost more, and outlines mark row minima.}
\label{fig:transfer}
\end{figure*}

Transfer error rates range from 33.6\% to 82.8\% (Fig.~\ref{fig:transfer}). The default Mahalanobis gate marks every day at every station out-of-support. With a perfect reviewer, it therefore becomes exactly the always-review baseline: 100\% escalation at cost $c_h=1$. This is cheapest on eight sites; learning-to-defer one-vs-all (L2D-OvA) is cheaper on Clearwater and Snake. The result verifies a deterministic conservative fallback, not transfer competence. A system requiring 100\% review needs local calibration before sustained deployment.

Chow escalates 8.2--35.6\% but is never cheapest. The post-hoc L2D variants escalate 68.2--80.7\% and cost 0.850--1.354; OvA wins two sites. The gate's distinction is therefore not unique shift sensitivity, but behavior specified before target labels are observed.

Fixed review quotas remove the full-review guarantee. We cross reviewer accuracy $a\in\{0.80,0.90,0.95\}$ with forced review rates $B\in\{0.10,0.30,0.50\}$ and average cost across stations. Here $B$ is a \emph{forced} review rate: every policy reviews exactly $B$ of cases, including negative-saving ones. This is a workload quota, not an at-most-$B$ constraint; by Prop.~\ref{prop:budget} an at-most-$B$ policy would weakly dominate, so these are conservative upper bounds, and the comparison is fair because all policies are forced identically. EcoTrust and Chow are nearly indistinguishable under shift. At $a\leq0.90$, neither beats the best competing policy in any budget setting. At $a=0.95,B=0.50$, EcoTrust costs 1.773 versus 1.865 for automation and 1.778 for Chow; at $B=0.30$, OvA is best at 1.694. Thus the gate defines a fallback under detected extrapolation, but its ranking does not recover transferable case-level risk.

Empirical- and Ledoit--Wolf-Mahalanobis gates flag all ten stations for every $\epsilon\in[0.01,0.20]$. Standardized Euclidean distance does so for $\epsilon\ge0.025$ (three stations at 0.01). Across the empirical-Mahalanobis sweep, in-distribution flagging rises from 1.2\% to 16.6\% and cost from 0.565 to 0.593, exposing the review-load trade-off.

% -------------------------------------------------------
\section{Discussion}
\label{sec:discussion}

\textbf{What the evidence supports.} The framework connects asymmetric costs, reviewer quality, and budgeted review in a single deployable comparison. Consistent with prior analyses showing that confidence sufficiency depends on the model and deployment setting~\cite{jitkrittum2023confidence,cattelan2024broken}, the empirical recommendation is conditional: use Chow for the well-calibrated in-distribution backend studied here, and consider a learned risk model only after validating lower cost on temporal calibration data. The measured accuracy crossing is close to the predicted reviewer breakeven, and the perfect-reviewer fallback bound is attained at all ten transfer stations.

\textbf{What it does not support.} Learned risk is not uniformly better than confidence: it costs 0.567 versus Chow's 0.416 for the main backend and wins on six of 13 alternatives. The positive Pearson association with ECE is moderate, while the Spearman interval includes zero. Under transfer, post-hoc L2D can approach or beat the gate, and at fixed forced review rates no shifted-domain policy dominates. EcoTrust therefore characterizes deployment choices rather than establishing algorithmic dominance or general shift robustness.

\textbf{Seasonal structure.} A month-only logistic regression reaches 94.1\% accuracy in August--November, above the ensemble's 92.7\%. The study does not claim to improve raw accuracy. Its target is the downstream deferral decision under costs and review constraints, while the strong calendar baseline confirms that this remains a comparatively simple classification task.

\textbf{Limitations.} The meta risk model uses 1{,}195 rows, including 707 temporal out-of-fold cases; a longer record is needed to resolve the $0.01$ operating tail. The ten stations broaden geography but not task domain, and several have no positive August--November days under the fixed $18^\circ$C label. The gate detects covariate rather than concept shift and reviews every transfer case. Reviewer errors and L2D expert decisions are simulated and class-independent; actual difficulty-dependent behavior may differ. Finally, B6 and B7 freeze class logits for comparability and are not full joint-training reproductions.

% -------------------------------------------------------
\section{Conclusion}
\label{sec:conclusion}

We have presented a controlled case study of confidence rejection and learned-risk deferral around a frozen predictor. Temporal evaluation shows that better error ranking need not lower decision cost: Chow is cheaper for the primary in-distribution backend, while learned risk wins on six of 13 alternatives. A calibration-matched test does not resolve a mean cost difference between the full signal bank and a class-aware recalibrator. Under station transfer, the support gate becomes an always-review fallback rather than demonstrating transferable risk estimation. EcoTrust's contribution is therefore an auditable decision protocol and a set of measured operating boundaries, not a claim of algorithmic dominance. Validation on additional tasks and with real reviewers remains necessary.

% -------------------------------------------------------
\bibliographystyle{IEEEtran}
\bibliography{IEEEabrv,references}

\end{document}